		\newcounter{theorem_c} 
		\theoremstyle{plain} 
		\newtheorem{theorem}[theorem_c]{Theorem}
		\newtheoremstyle{exampstyle}
		  {2mm} 
		  {2mm} 
		  {\itshape} 
		  {} 
		  {\bfseries} 
		  {.} 
		  {.5em} 
		  {} 
		\theoremstyle{exampstyle}
		\newtheorem{definition}[theorem_c]{Definition}
	\newcommand{\inlineQuote}[1]{\textquotedblleft #1\textquotedblright} 
	\newcommand{\reals}{\mathbb{R}} 
	\newcommand{\singletonSet}{\mathbb{1}}
		\newcommand{\ket}[1]{\left\vert #1 \right\rangle} 
		\newcommand{\bra}[1]{\langle #1 \vert} 
		\newcommand{\SpaceH}{\mathcal{H}} 
		\newcommand{\SpaceG}{\mathcal{G}}
		\newcommand{\RMatCategory}[1]{#1\operatorname{-Mat}} 
		\newcommand{\CategoryC}{\mathcal{C}}
		\newcommand{\obj}[1]{\operatorname{obj} \, #1} 
		\newcommand{\classicalSubcategory}[1]{#1_{K}} 
	\newcommand{\Xcolour}{Red}
	\newcommand{\Zcolour}{YellowGreen}
	\newcommand{\Xaltcolour}{Purple}
	\newcommand{\Zaltcolour}{Cyan}
	\newcommand{\Xbwcolour}{black!80}
	\newcommand{\Zbwcolour}{white}
	\newcommand{\Ybwcolour}{black!15}
	\newcommand{\Wbwcolour}{black!55}
	\newcommand{\trace}[1]{\hbox{\begin{tikzpicture} [scale=1.2,transform shape,rotate=-90] 

\def\deltax{0.3} 
\def\deltay{0.5} 

\path[use as bounding box] (-\deltax,-0.5*\deltay) rectangle (\deltax,0.65*\deltay);

\node (mult) at (0,0.3*\deltay) [upground,scale=0.5] {};
\node (mult_label_in) at (0,-0.7*\deltay) {};
\draw[-] (mult_label_in) to (mult);

\end{tikzpicture}}\!_{#1}} 
	\newcommand{\hbox{\input{symbols/mapSym.tex}}\!\!}{\hbox{\input{symbols/mapSym.tex}}\!\!} 
	\newcommand{\hbox{\input{symbols/mapconjSym.tex}}\!\!}{\hbox{\input{symbols/mapconjSym.tex}}\!\!} 
	\tikzset{
	  rectangle with rounded corners north west/.initial=4pt,
	  rectangle with rounded corners south west/.initial=4pt,
	  rectangle with rounded corners north east/.initial=4pt,
	  rectangle with rounded corners south east/.initial=4pt,
	}
	\tikzset{->-/.style={decoration={markings,mark=at position #1 with {\arrow{>}}},postaction={decorate}}}
	\tikzset{-<-/.style={decoration={markings,mark=at position #1 with {\arrow{<}}},postaction={decorate}}}
	\tikzstyle{every picture}=[baseline=-0.25em,scale=0.5]
	\tikzstyle{box} = [draw,shape=rectangle,inner sep=2pt,minimum height=6mm,minimum width=6mm,fill=white] 
	\tikzstyle{boxlarge} = [draw,shape=rectangle,inner sep=2pt,minimum height=1.5cm,minimum width=8mm,fill=white] 
	\tikzstyle{boxLarge} = [draw,shape=rectangle,inner sep=2pt,minimum height=2cm,minimum width=10mm,fill=white] 
	\tikzstyle{boxLargerThin} = [draw,shape=rectangle,inner sep=2pt,minimum height=3cm,minimum width=5mm,fill=white] 
	\tikzstyle{boxsmall} = [draw,shape=rectangle,inner sep=2pt,minimum height=3mm,minimum width=3mm,fill=white] 
	\tikzstyle{dot} = [inner sep=0mm,minimum width=3mm,minimum height=3mm,draw,shape=circle,text depth=-0.1mm]
	\tikzstyle{Zbwdot} = [dot, fill=\Zbwcolour]
	\tikzstyle{Xbwdot} = [dot, fill=\Xbwcolour]
	\tikzstyle{Ybwdot} = [dot, fill=\Ybwcolour]
	\tikzstyle{Wbwdot} = [dot, fill=\Wbwcolour]
	\tikzstyle{antipode} = [boxsmall] 
	\tikzstyle{state} = [draw, rectangle with rounded corners,
	\tikzstyle{statelarge} = [draw, rectangle with rounded corners,
	\tikzstyle{stateLarge} = [draw, rectangle with rounded corners,
	\tikzstyle{stateLargerThin} = [draw, rectangle with rounded corners,
	\tikzstyle{effect} = [draw, rectangle with rounded corners,
	\tikzstyle{effectLargerThin} = [draw, rectangle with rounded corners,
	\tikzstyle{scalar}=[diamond,draw,inner sep=1pt,font=\small,fill=white]
	\tikzstyle{cdnode}=[fill=white]
	\tikzstyle{labelnode}=[fill=white]
	\tikzstyle{tightlabelnode}=[fill=white,inner sep = 0.1mm]
	\tikzstyle{none}=[inner sep=0pt]
	\tikzstyle{whiteline}=[-, line width=4pt, draw=white]
	\tikzstyle{trace}=[circuit ee IEC,thick,ground,scale=2.5]
	\tikzstyle{cotrace}=[circuit ee IEC,thick,ground,rotate=180,scale=2.5]
	\tikzstyle{upground}=[circuit ee IEC,thick,ground,rotate=90,scale=2.5]
	\tikzstyle{downground}=[circuit ee IEC,thick,ground,rotate=-90,scale=2.5]
	\tikzstyle{doubled} = [line width=1.8pt] 
	\tikzstyle{empty diagram}=[draw=gray!40!white,dashed,shape=rectangle,minimum width=1cm,minimum height=1cm]
\begin{document}

\title{Purification and time-reversal deny entanglement in LOCC-distinguishable orthonormal bases}
\author{Stefano Gogioso}
\orcid{0000-0001-7879-8145}
\email{stefano.gogioso@cs.ox.ac.uk}
\affiliation{Quantum Group, Oxford University}
\author{Subhayan Roy Moulik}
\email{subhayan.roy.moulik@cs.ox.ac.uk}
\affiliation{Quantum Group, Oxford University}
\date{\today}

\newcommand{\commentSRM}[1]{\textcolor{blue}{SRM says: #1}}
\newcommand{\commentSG}[1]{\textcolor{red}{SG says: #1}}

\newcommand{\cf}{cf. }
\newcommand{\eg}{e.g. }


\maketitle

\begin{abstract}
\noindent We give a simple proof, based on time-reversibility and purity, that a complete orthonormal family of pure states which can be perfectly distinguished by LOCC cannot contain any entangled state. Our results are really about the shape of certain states and processes, and are valid in arbitrary categorical probabilistic theories with time-reversal. From the point of view of the resource theory of entanglement, our results can be interpreted to say that free processes can distinguish between the states in a complete orthonormal family only when the states themselves are all free.
\end{abstract}

\section{Introduction}

Quantum theory allows perfect deterministic distinguishability of orthogonal states: for example, there is a POVM (positive operator valued measure) which, given a state from the orthogonal set $\{\ket{00}, \ket{11}, \ket{01}+\ket{10},\ket{01}-\ket{10}\}$ as its input, returns a classical output in the set $\{0,1,+,-\}$ uniquely identifying the state. When the system under consideration is multipartite (e.g. the bipartite system above), the POVM will in general contain non-separable effects: it is therefore interesting to investigate which limitations are imposed on distinguishability tasks by the requirement that quantum operations be local to the parties involved, but allowing arbitrary amounts of classical communication between the parties, which can inform the classical control of the local quantum operations (i.e. we wish to consider LOCC scenarios). 

Investigation on the power of LOCC scenarios was initiated by Peres and Wooters, leading to the development of their much-celebrated teleportation protocol (\cf \cite{P04}). However, it could be argued that the cornerstone result which truly boosted the field was the proof that not all sets of orthogonal product states can be distinguished by LOCC scenarios (\cf \cite{Betal99}). This is a very counter-intuitive result, in a sense, as it proves that there is an orthonormal basis of 9 bipartite qutrit states which can be \inlineQuote{prepared locally}---as it only consists of product states---but which cannot be distinguished by LOCC.

The issue of LOCC distinguishability has been studied in more general settings throughout the years, with more LOCC-indistinguishable orthonormal families explicitly constructed (\cf \cite{ZTWL16}), and all bipartite qutrit and tripartite qubit families characterised (\cf \cite{FS09,WH02}). It is also known that it is always possible to distinguish between two orthogonal states with certainty by LOCC (\cf \cite{Wetal00}), but that it is generally impossible to distinguish with certainty between three or more states (\cf \cite{Getal01, Getal02}). Further research includes connections with the theory of entanglement witnesses (\cf \cite{C04}), cryptographic applications such as secret sharing and data hiding (\cf \cite{HM03,DHT03, LW13}), as well as a number of alternative approaches to the problem (\eg \cf \cite{DFXY09, CLMO13, Cetal14, CDH14,RoyMoulik2016}).

The question of which sets of orthogonal quantum states can be perfectly distinguished by LOCC protocols remains open \cite{Cetal14}, as does the quest for the understanding of the physical principles playing a role in this problem. In this work, we focus our efforts on the role played by the physical and informational principles of purity and time-reversal: harnessing their power, we provide a simple alternative proof of a key result by \cite{HSSK03}, which we furthermore generalise beyond quantum theory. We show that if it is possible to distinguish between the pure states of an orthonormal basis by LOCC alone then the basis can only contain product states. Our results can also be recast from the point of view of resource theory (\cf \cite{Coecke2016b}), and specifically of the resource theory of entanglement by \cite{Chiribella2015}: within that framework, our result says that free operations cannot distinguish between the states in an orthonormal basis unless the states themselves are all free.

Our proof is carried out in the formalism of string diagrams for categorical quantum mechanics (\cf \cite{Abramsky2004,Abramsky2009,Backens2014,Coecke2009picturalism,Coecke2011,Coecke2015,Coecke2016,Coecke2017,Horsman2011,Horsman2017}), and more precisely within the recently introduced framework of categorical probabilistic theories (CPTs) by \cite{GS17}. As a consequence, our result extends beyond quantum theory, to any probabilistic theory with time-reversal. The use of diagrammatic methods also highlights an important feature of the result which did not stand out in the original proof of \cite{HSSK03}: LOCC distinguishability is really a story about the \inlineQuote{shape} of certain states and processes, and more specifically a story about how processes of \inlineQuote{LOCC shape} fail at distinguishing between states which are not of \inlineQuote{LOCC shape}.

We have chosen to use categorical probabilistic theories, rather than the more established \textit{operational probabilistic theories} (OPTs, \cf \cite{Hardy2001,Hardy2011,Hardy2011informational,Hardy2013,Chiribella2010,Chiribella2011,Chiribella2014,Chiribella2016,DAriano2017}), for two main reasons. Firstly, the categorical framework is process-oriented, rather than probability-oriented: in CPTs, classical control and classical outcomes can be talked about in an explicitly compositional way, and are naturally related by time-reversal; in OPTs, on the other hand, classical systems are introduced externally using indices, in a non-compositional way. Secondly, the categorical framework provides us with additional flexibility when decomposing \inlineQuote{normalised} processes (e.g. CPTP maps and quantum instruments) in terms of \inlineQuote{unnormalised} building blocks (e.g. generic CP maps and families thereof), especially when it comes to dealing with the time-reversal of the building blocks themselves (which might not be individually sub-normalised); conversely, OPTs impose very specific requirements on the building blocks allowed in processes, introducing the need for unnecessary additional checks and restrictions.

\section{Categorical probabilistic theories}

\paragraph{Categorical Probabilistic Theories.} When talking about a \emph{categorical theory}, we mean a symmetric monoidal category which captures physical systems and the compositional structure of processes between them. In the general spirit of categorical quantum mechanics, we avoid restricting our attention to physical processes alone, but instead we allow the presence of a whole spectrum of idealised, abstract processes that provide building blocks for the physical processes themselves, or otherwise help in reasoning about them. When talking about a \emph{categorical probabilistic theory} (CPT), we mean a symmetric monoidal category which includes at least the classical systems amongst its ranks, and which is furthermore compatible with a couple of basic operational features---namely probabilistic structure and marginalisation. We briefly motivate our requirements below.
\begin{enumerate}
	\item Every probabilistic theory has classical probabilistic systems under the hood: because these are themselves physical systems, we model them explicitly. In particular, their interface with other systems (e.g. measurements and preparations) can be talked about in compositional terms.
	\item It makes no sense to talk about a probabilistic theory if the probabilistic structure does not extend from classical systems to arbitrary systems. If this were not the case, one would not necessarily be able to work with scenarios in which multiplexed processes are controlled by a classical random variable, or to condition a process based on a classical output.
	\item Every probabilistic theory with operational aspirations should include a notion of localisation of states and processes. Indeed, the absence of a notion of local state compatible with classical marginalisation renders most protocol specifications meaningless, a fate they share with the notion of no-signalling and with the probabilistic foundations of thermodynamics. Sensible theories without a notion of local state do exist (e.g. Everettian quantum theory), but they present a number of operational challenges, and they are not probabilistic in nature.
\end{enumerate}
We now proceed to formalise these requirements in categorical terms. When talking about \emph{classical theory}, we mean the symmetric monoidal category  $\RMatCategory{\reals^+}$ which has finite sets $X$ as systems, and where processes $X \rightarrow Y$ are the $Y$-by-$X$ matrices with entries in the non-negative reals $\reals^+$. Processes $X \rightarrow Y$ in classical theory form a convex cone, i.e. they are closed under $\reals^+$-linear combinations. Because we have explicit linear structure, we are allowed to write the following resolution of the identity:
\begin{equation}\label{idClassicalResolution}
\begin{tikzpicture}
	\begin{pgfonlayer}{nodelayer}
		\node [style=labelnode] (0) at (-6, 0) {$X$};
		\node [style=labelnode] (1) at (-2, 0) {$X$};
		\node [style=labelnode] (2) at (4, 0) {$X$};
		\node [style=labelnode] (3) at (10, 0) {$X$};
		\node [style=labelnode] (4) at (0, 0) {$=$};
		\node [style=labelnode] (5) at (2, -0.25) {$\sum\limits_{x \in X}$};
		\node [style=effect] (6) at (6.25, 0) {$x$};
		\node [style=state] (7) at (7.75, 0) {$x$};
	\end{pgfonlayer}
	\begin{pgfonlayer}{edgelayer}
		\draw [style=dashed] (0) to (1);
		\draw [style=dashed] (2) to (6);
		\draw [style=dashed] (7) to (3);
	\end{pgfonlayer}
\end{tikzpicture}
\end{equation}
Classical marginalisation gives rise to a family of effects $(\trace{X}:X \rightarrow \singletonSet)_{X \in \obj{\RMatCategory{\reals^+}}}$ which canonically localises states and processes, and are therefore known as the \emph{discarding maps}:
\begin{equation}\label{discardingMapClassicalResolution}
\begin{tikzpicture}
	\begin{pgfonlayer}{nodelayer}
		\node [style=labelnode] (0) at (-5, 0) {$X$};
		\node [style=trace] (1) at (-2, 0) {};
		\node [style=labelnode] (2) at (4, 0) {$X$};
		\node [style=labelnode] (3) at (0, 0) {$=$};
		\node [style=labelnode] (4) at (2, -0.25) {$\sum\limits_{x \in X}$};
		\node [style=effect] (5) at (6.25, 0) {$x$};
	\end{pgfonlayer}
	\begin{pgfonlayer}{edgelayer}
		\draw [style=dashed] (0) to (1);
		\draw [style=dashed] (2) to (5);
	\end{pgfonlayer}
\end{tikzpicture}
\end{equation}
In the string diagrams literature this is often known as an \emph{environment structure} \cite{Coecke2013a,Coecke2016}, because it satisfies the following equations:
\begin{equation}\label{classicalEnvironmentStructure}
\begin{tikzpicture}[scale=0.8]
	\begin{pgfonlayer}{nodelayer}
		\node [style=labelnode] (0) at (-7.5, 0) {$=$};
		\node [style=labelnode, inner sep=0.1 mm] (1) at (-13.5, 0) {$X\otimes Y$};
		\node [style=trace] (2) at (-9.5, 0) {};
		\node [style=labelnode, inner sep=0.1 mm] (3) at (-5.5, 0.75) {$X$};
		\node [style=trace] (4) at (-2.5, 0.75) {};
		\node [style=labelnode, inner sep=0.1 mm] (5) at (-5.5, -0.75) {$Y$};
		\node [style=trace] (6) at (-2.5, -0.75) {};
		\node [style=labelnode, inner sep=0.1 mm] (7) at (2.5, 0) {$\singletonSet$};
		\node [style=trace] (8) at (5.5, 0) {};
		\node [style=labelnode] (9) at (7.5, 0) {$=$};
		\node [style=empty diagram] (9) at (10.5, 0) {};
	\end{pgfonlayer}
	\begin{pgfonlayer}{edgelayer}
		\draw [style=dashed] (1) to (2);
		\draw [style=dashed] (3) to (4);
		\draw [style=dashed] (5) to (6);
		\draw [style=dashed] (7) to (8);
	\end{pgfonlayer}
\end{tikzpicture}
\end{equation}
The empty diagram on the right hand side of the right equation is simply the scalar 1. Any choice of environment structure singles out a sub-category of \emph{normalised} processes, namely those processes $f$ satisfying the following condition:
\begin{equation}\label{normalisedClassical}
\begin{tikzpicture}[scale=0.8]
	\begin{pgfonlayer}{nodelayer}
		\node [style=box] (0) at (-5.5, 0) {$f$};
		\node [style=labelnode, inner sep=0.1 mm] (1) at (-8.5, 0) {};
		\node [style=trace] (2) at (-2, 0) {};
		\node [style=labelnode] (3) at (0, 0) {$=$};
		\node [style=labelnode, inner sep=0.1 mm] (4) at (2, 0) {};
		\node [style=trace] (5) at (6, 0) {};
	\end{pgfonlayer}
	\begin{pgfonlayer}{edgelayer}
		\draw [style=dashed] (1) to (0);
		\draw [style=dashed] (0) to (2);
		\draw [style=dashed] (4) to (5);
	\end{pgfonlayer}
\end{tikzpicture}
\end{equation}
In classical theory, the normalised processes are exactly the stochastic matrices, and in particular the normalised states are the probability distributions.
\begin{definition}[\textbf{Categorical Probabilistic Theories (\cite{GS17})}]\hfill\\
A \emph{categorical probabilistic theory} is a symmetric monoidal category $\CategoryC$ which satisfies the following three requirements:
\begin{enumerate}
	\item There is a chosen full subcategory of $\CategoryC$, denoted by $\classicalSubcategory{\CategoryC}$, together with a chosen $\reals^+$-linear monoidal equivalence between $\classicalSubcategory{\CategoryC}$ and the category $\RMatCategory{\reals^+}$.
	\item The category $\CategoryC$ has $\reals^+$-linear structure compatible with that which the chosen subcategory $\classicalSubcategory{\CategoryC}$ inherits from  $\RMatCategory{\reals^+}$.
	\item The category $\CategoryC$ has a chosen environment structure $(\trace{\SpaceH}:\SpaceH \rightarrow \singletonSet)_{\SpaceH \in \obj{\CategoryC}}$ compatible with that which the chosen subcategory $\classicalSubcategory{\CategoryC}$ inherits from $\RMatCategory{\reals^+}$.
\end{enumerate}
\end{definition}
\noindent In the context of a specific CPT, we refer to $\classicalSubcategory{\CategoryC}$ as the \emph{classical theory}, to its object as the \emph{classical systems} and to its morphisms as the \emph{classical processes}. In particular, $\singletonSet$ is the tensor unit for $\CategoryC$, and the scalars of $\CategoryC$ form the probabilistic semiring $\reals^+$. A generic process in a CPT can involve both classical systems and more general systems:
\begin{equation}\label{process}
	\begin{tikzpicture}
	\begin{pgfonlayer}{nodelayer}
		\node [style=none] (0) at (-2, -0.25) {};
		\node [style=none] (1) at (-0.5, 0.25) {};
		\node [style=none] (2) at (-2, 0.25) {};
		\node [style=box] (3) at (0, 0) {$M$};
		\node [style=none] (4) at (-0.5, -0.25) {};
		\node [style=none] (5) at (2, 0.25) {};
		\node [style=none] (6) at (0.5, 0.25) {};
		\node [style=none] (7) at (2, -0.25) {};
		\node [style=none] (8) at (0.5, -0.25) {};
		\node [style=tightlabelnode,text=gray] (9) at (-3, 1.5) {generic input};
		\node [style=none] (10) at (-2.25, 0.5) {};
		\node [style=tightlabelnode, text=gray] (11) at (3, 1.5) {generic output};
		\node [style=none] (12) at (2.25, 0.5) {};
		\node [style=tightlabelnode, text=gray] (13) at (-3, -1.5) {classical input};
		\node [style=none] (14) at (2.25, -0.5) {};
		\node [style=tightlabelnode, text=gray] (15) at (3, -1.5) {classical output};
		\node [style=none] (16) at (-2.25, -0.5) {};
	\end{pgfonlayer}
	\begin{pgfonlayer}{edgelayer}
		\draw [style=dashed] (0.center) to (4.center);
		\draw [style=-, line width=4 pt, draw=white, in=180, out=0] (2.center) to (1.center);
		\draw [style=-] (2.center) to (1.center);
		\draw [style=-, line width=4 pt, draw=white, in=180, out=0] (6.center) to (5.center);
		\draw [style=-] (6.center) to (5.center);
		\draw [style=dashed] (8.center) to (7.center);
		\draw [style=->, draw=gray] (9) to (10.center);
		\draw [style=->, draw=gray] (11) to (12.center);
		\draw [style=->, draw=gray] (13) to (16.center);
		\draw [style=->, draw=gray] (15) to (14.center);
	\end{pgfonlayer}
\end{tikzpicture}
\end{equation}
In line with the nomenclature adopted for classical systems, we refer to the $\trace{\SpaceH}$ maps involved in the environment structure as the \emph{discarding maps}, and to those processes $M$ satisfying the following equation as \emph{normalised}:
\begin{equation}\label{processNormalised}
\begin{tikzpicture}
	\begin{pgfonlayer}{nodelayer}
		\node [style=none] (0) at (-7.5, -0.25) {};
		\node [style=none] (1) at (-6, 0.25) {};
		\node [style=none] (2) at (-7.5, 0.25) {};
		\node [style=box] (3) at (-5.5, 0) {$M$};
		\node [style=none] (4) at (-6, -0.25) {};
		\node [style=none] (6) at (-5, 0.25) {};
		\node [style=none] (8) at (-5, -0.25) {};
		\node [style=none] (9) at (0, 0) {$=$};
		\node [style=none] (11) at (1, -0.25) {};
		\node [style=none] (12) at (1, 0.25) {};
	\end{pgfonlayer}
	\begin{pgfonlayer}{edgelayer}
		\node [style=trace] (13) at (3.5, -0.25) {};
		\node [style=trace] (5) at (-3.5, 0.25) {};
		\node [style=trace] (7) at (-2.5, -0.25) {};
		\node [style=trace] (10) at (2.5, 0.25) {};
		\draw [style=dashed] (0.center) to (4.center);
		\draw [style=-] (2.center) to (1.center);
		\draw [style=-, in=180, out=0] (6.center) to (5);
		\draw [style=-, in=180, out=0] (12.center) to (10);
		\draw [style=-, line width=4pt, draw=white] (8.center) to (7);
		\draw [style=-, line width=4pt, draw=white] (11.center) to (13);
		\draw [style=dashed, in=180, out=0] (8.center) to (7);
		\draw [style=dashed, in=180, out=0] (11.center) to (13);
	\end{pgfonlayer}
\end{tikzpicture}
\end{equation}

\paragraph{LOCC instruments.} This work is concerned with a very special class of processes which can me captured by the CPT framework, namely that of LOCC instruments. In its most generic form, an \emph{LOCC instrument} is a process taking the following shape:
\begin{equation}\label{LOCCinstrument}
	\resizebox{0.9\textwidth}{!}{\begin{tikzpicture}
	\begin{pgfonlayer}{nodelayer}
		\node [style=none] (0) at (-13, -2.75) {};
		\node [style=none] (1) at (-11, -0.25) {};
		\node [style=box] (2) at (-10.5, -2.5) {$M_{N}$};
		\node [style=none] (3) at (-11, -2.25) {};
		\node [style=none] (4) at (-13, -0.75) {};
		\node [style=none] (5) at (-13, 2) {};
		\node [style=none] (6) at (-11, 2.5) {};
		\node [style=none] (7) at (-11, -2.75) {};
		\node [style=none] (8) at (-16, -0.25) {};
		\node [style=none] (9) at (-16, 2.5) {};
		\node [style=box] (10) at (-10.5, -0.5) {$M_{N-1}$};
		\node [style=none] (11) at (-11, -0.75) {};
		\node [style=none] (12) at (-10.5, 1) {$\vdots$};
		\node [style=box] (13) at (-10.5, 2.25) {$M_{1}$};
		\node [style=none] (14) at (-16, -2.25) {};
		\node [style=none] (15) at (-11, 2) {};
		\node [style=none] (16) at (-14, 2) {};
		\node [style=none] (17) at (-16, 2) {};
		\node [style=none] (18) at (-14, -0.75) {};
		\node [style=none] (19) at (-16, -0.75) {};
		\node [style=none] (20) at (-14, -2.75) {};
		\node [style=none] (21) at (-16, -2.75) {};
		\node [style=none] (22) at (-5.5, -0.25) {};
		\node [style=none] (23) at (-10, -0.25) {};
		\node [style=none] (24) at (-5.5, -2.25) {};
		\node [style=none] (25) at (-10, -2.25) {};
		\node [style=none] (26) at (-5.5, 2.5) {};
		\node [style=none] (27) at (-10, 2.5) {};
		\node [style=none] (28) at (-8.5, 2) {};
		\node [style=none] (29) at (-10, 2) {};
		\node [style=none] (30) at (-8.5, -0.75) {};
		\node [style=none] (31) at (-10, -0.75) {};
		\node [style=none] (32) at (-8.5, -2.75) {};
		\node [style=none] (33) at (-10, -2.75) {};
		\node [style=tightlabelnode, text=gray] (34) at (-10.5, -4.75) {Local instruments};
		\node [style=none] (35) at (-10.5, -3.75) {};
		\node [style=none] (36) at (-7.5, -2.75) {};
		\node [style=none] (37) at (-5.5, -2.75) {};
		\node [style=none] (38) at (-5.5, -0.75) {};
		\node [style=none] (39) at (-7.5, -0.75) {};
		\node [style=none] (40) at (-5.5, 2) {};
		\node [style=none] (41) at (-7.5, 2) {};
		\node [style=none] (42) at (-8, 3.75) {};
		\node [style=tightlabelnode, text=gray] (43) at (-10.5, 5) {Global classical operations};
		\node [style=none] (44) at (-13.5, 3.75) {};
		\node [style=none] (45) at (-3.5, 0) {$=$};
		\node [style=none] (46) at (-1.5, -0.25) {$\sum\limits_{\underline{x'},\underline{x}}$};
		\node [style=none] (47) at (0, -0.25) {$\sum\limits_{\underline{y},\underline{y'}}$};
		\node [style=state] (48) at (10, 1.5) {\small$x_1$};
		\node [style=none] (49) at (5.5, -2.25) {};
		\node [style=none] (50) at (12.5, 2.5) {};
		\node [style=none] (51) at (12.5, -0.25) {};
		\node [style=effect] (52) at (15.75, -1.25) {\small$y_{N\!-\!1}$};
		\node [style=none] (53) at (20, -0.25) {};
		\node [style=none] (54) at (20, 2.5) {};
		\node [style=state] (55) at (10.25, -1.25) {\small$x_{N\!-\!1}$};
		\node [style=box] (56) at (13, -0.5) {$M_{N-1}$};
		\node [style=none] (57) at (13.5, -0.25) {};
		\node [style=none] (58) at (13, 1) {$\vdots$};
		\node [style=none] (59) at (12.5, -2.75) {};
		\node [style=none] (60) at (12.5, -0.75) {};
		\node [style=effect] (61) at (16, -3.25) {\small$y_N$};
		\node [style=none] (62) at (12.5, 2) {};
		\node [style=none] (63) at (5.5, -0.25) {};
		\node [style=none] (64) at (5.5, 2.5) {};
		\node [style=box] (65) at (13, 2.25) {$M_{1}$};
		\node [style=box] (66) at (13, -2.5) {$M_{N}$};
		\node [style=none] (67) at (12.5, -2.25) {};
		\node [style=none] (68) at (13.5, -2.25) {};
		\node [style=none] (69) at (20, -2.25) {};
		\node [style=state] (70) at (10, -3.25) {\small$x_N$};
		\node [style=none] (71) at (13.5, -2.75) {};
		\node [style=none] (72) at (13.5, 2) {};
		\node [style=none] (73) at (13.5, -0.75) {};
		\node [style=effect] (74) at (16, 1.5) {\small$y_1$};
		\node [style=none] (75) at (13.5, 2.5) {};
		\node [style=none] (78) at (2, 0) {$p_{\underline{x'},\underline{x}}$};
		\node [style=none] (79) at (3.5, 0) {$q_{\underline{y},\underline{y'}}$};
		\node [style=effect] (80) at (8.5, 1.5) {\small$x'_1$};
		\node [style=effect] (81) at (8.25, -1.25) {\small$x'_{N\!-\!1}$};
		\node [style=effect] (82) at (8.5, -3.25) {\small$x'_N$};
		\node [style=none] (83) at (5.5, 2) {};
		\node [style=none] (84) at (5.5, -0.75) {};
		\node [style=none] (85) at (5.5, -2.75) {};
		\node [style=none] (86) at (20, -0.75) {};
		\node [style=state] (87) at (17.5, -3.25) {\small$y'_N$};
		\node [style=none] (88) at (20, 2) {};
		\node [style=none] (89) at (20, -2.75) {};
		\node [style=state] (90) at (17.75, -1.25) {\small$y'_{N\!-\!1}$};
		\node [style=state] (91) at (17.5, 1.5) {\small$y'_1$};
		\node [style=none] (92) at (20, 1) {$\vdots$};
		\node [style=none] (93) at (5.5, 1) {$\vdots$};
	\end{pgfonlayer}
	\begin{pgfonlayer}{edgelayer}
		\node [style=boxLargerThin, fill=gray!25] (76) at (-13.5, 0) {};
		\node [style=boxLargerThin, fill=gray!25] (77) at (-8, 0) {};
		\draw [style=dashed] (5.center) to (15.center);
		\draw [style=dashed] (4.center) to (11.center);
		\draw [style=dashed, in=180, out=0] (0.center) to (7.center);
		\draw [style=-, line width=4 pt, draw=white, in=180, out=0] (9.center) to (6.center);
		\draw [style=-] (9.center) to (6.center);
		\draw [style=-, line width=4 pt, draw=white, in=180, out=0] (8.center) to (1.center);
		\draw [style=-] (8.center) to (1.center);
		\draw [style=-, line width=4 pt, draw=white, in=180, out=0] (14.center) to (3.center);
		\draw [style=-] (14.center) to (3.center);
		\draw [style=dashed] (17.center) to (16.center);
		\draw [style=dashed] (19.center) to (18.center);
		\draw [style=dashed] (21.center) to (20.center);
		\draw [style=-, line width=4 pt, draw=white, in=180, out=0] (27.center) to (26.center);
		\draw [style=-] (27.center) to (26.center);
		\draw [style=-, line width=4 pt, draw=white, in=180, out=0] (23.center) to (22.center);
		\draw [style=-] (23.center) to (22.center);
		\draw [style=-, line width=4 pt, draw=white, in=180, out=0] (25.center) to (24.center);
		\draw [style=-] (25.center) to (24.center);
		\draw [style=dashed] (29.center) to (28.center);
		\draw [style=dashed] (31.center) to (30.center);
		\draw [style=dashed] (33.center) to (32.center);
		\draw [style=->, draw=gray, in=-90, out=90] (34) to (35.center);
		\draw [style=dashed, in=180, out=0] (36.center) to (37.center);
		\draw [style=dashed] (39.center) to (38.center);
		\draw [style=dashed] (41.center) to (40.center);
		\draw [style=->, draw=gray, in=90, out=-90, looseness=0.75] (43) to (42.center);
		\draw [style=->, draw=gray, in=90, out=-90, looseness=0.75] (43) to (44.center);
		\draw [style=dashed, in=180, out=0] (48) to (62.center);
		\draw [style=dashed, in=180, out=0] (55) to (60.center);
		\draw [style=dashed, in=180, out=0] (70) to (59.center);
		\draw [style=-, line width=4 pt, draw=white, in=180, out=0] (64.center) to (50.center);
		\draw [style=-] (64.center) to (50.center);
		\draw [style=-, line width=4 pt, draw=white, in=180, out=0] (63.center) to (51.center);
		\draw [style=-] (63.center) to (51.center);
		\draw [style=-, line width=4 pt, draw=white, in=180, out=0] (49.center) to (67.center);
		\draw [style=-] (49.center) to (67.center);
		\draw [style=-, line width=4 pt, draw=white, in=180, out=0] (75.center) to (54.center);
		\draw [style=-] (75.center) to (54.center);
		\draw [style=-, line width=4 pt, draw=white, in=180, out=0] (57.center) to (53.center);
		\draw [style=-] (57.center) to (53.center);
		\draw [style=-, line width=4 pt, draw=white, in=180, out=0] (68.center) to (69.center);
		\draw [style=-] (68.center) to (69.center);
		\draw [style=dashed, in=180, out=0] (72.center) to (74);
		\draw [style=dashed, in=180, out=0] (73.center) to (52);
		\draw [style=dashed, in=180, out=0] (71.center) to (61);
		\draw [style=dashed, in=180, out=0] (83.center) to (80);
		\draw [style=dashed, in=180, out=0] (84.center) to (81);
		\draw [style=dashed, in=180, out=0] (85.center) to (82);
		\draw [style=dashed, in=180, out=0] (91) to (88.center);
		\draw [style=dashed, in=180, out=0] (90) to (86.center);
		\draw [style=dashed, in=180, out=0] (87) to (89.center);
	\end{pgfonlayer}
\end{tikzpicture}}
\end{equation}
A product family $M_1 \otimes ... \otimes M_N$ of processes---the local instruments---is sandwiched between two global classical processes $\sum_{\underline{x}',\underline{x}} p_{\underline{x}',\underline{x}} \ket{x_1...x_N} \bra{x'_1...x'_N}$ and $\sum_{\underline{y},\underline{y}'} q_{\underline{y},\underline{y}'}  \ket{y'_1...y'_N} \bra{y_1...y_N}$---the global classical operations---which we don't label explicitly for reasons of notational convenience. The global classical processes are allowed to act on the classical inputs and classical outputs of the local instruments, but leave all other inputs and outputs invariant---a fact which we denote by drawing the non-classical wires ``passing overhead''. Note that using global classical operations is equivalent to allowing classical communication between the parties: we can always implement one such global classical operation by allowing all parties to communicate their respective classical inputs to a distinguished party, who then performs the operation and sends the classical outputs back to the respective parties.

\paragraph{Purity.} Purity is a feature arising at the interface between quantum theory and thermodynamics~\cite{Chiribella2010,Chiribella2015}: pure processes can broadly be interpreted as not involving any probabilistic mixing due to non-trivial interactions with a discarded environment. To be precise, a process is pure if it cannot be decomposed in any way as a non-trivial $\reals^+$-linear combination of other processes:
\begin{equation}\label{purity}
	\resizebox{0.9\textwidth}{!}{\begin{tikzpicture}
	\begin{pgfonlayer}{nodelayer}
		\node [style=box] (0) at (-6, 0) {$M'$};
		\node [style=trace] (1) at (-3.75, -1) {};
		\node [style=none] (2) at (-5.5, -0.25) {};
		\node [style=tightlabelnode] (3) at (-3, 0.25) {};
		\node [style=tightlabelnode] (4) at (-8, 0) {};
		\node [style=none] (5) at (-5.5, 0.25) {};
		\node [style=none] (6) at (-6.5, 0) {};
		\node [style=tightlabelnode] (7) at (-9.5, 0) {$=$};
		\node [style=none] (8) at (-12.5, 0) {};
		\node [style=none] (9) at (-13.5, 0) {};
		\node [style=tightlabelnode] (10) at (-11, 0) {};
		\node [style=box] (11) at (-13, 0) {$M$};
		\node [style=tightlabelnode] (12) at (-15, 0) {};
		\node [style=tightlabelnode] (13) at (-0.5, 0) {$\Rightarrow$};
		\node [style=box] (14) at (11.5, 0.25) {$M$};
		\node [style=tightlabelnode] (15) at (9.5, 0.25) {};
		\node [style=tightlabelnode] (17) at (6.5, -1) {};
		\node [style=none] (18) at (4.5, 0.25) {};
		\node [style=tightlabelnode] (19) at (8, 0) {$=$};
		\node [style=none] (20) at (12, 0.25) {};
		\node [style=box] (21) at (4, 0) {$M'$};
		\node [style=tightlabelnode] (22) at (13.5, 0.25) {};
		\node [style=none] (23) at (3.5, 0) {};
		\node [style=tightlabelnode] (24) at (6.5, 0.25) {};
		\node [style=tightlabelnode] (25) at (2, 0) {};
		\node [style=none] (26) at (4.5, -0.25) {};
		\node [style=none] (27) at (11, 0.25) {};
		\node [style=state] (28) at (11.5, -1) {$p$};
		\node [style=tightlabelnode] (29) at (13.5, -1) {};
	\end{pgfonlayer}
	\begin{pgfonlayer}{edgelayer}
		\draw [style=-] (3) to (5.center);
		\draw [style=-] (4) to (6.center);
		\draw [style=dashed, in=180, out=0] (2.center) to (1);
		\draw [style=-] (10) to (8.center);
		\draw [style=-] (12) to (9.center);
		\draw [style=-] (24) to (18.center);
		\draw [style=-] (25) to (23.center);
		\draw [style=dashed, in=180, out=0] (26.center) to (17);
		\draw [style=-] (22) to (20.center);
		\draw [style=-] (15) to (27.center);
		\draw [style=dashed, in=180, out=0] (28) to (29);
	\end{pgfonlayer}
\end{tikzpicture}}
\end{equation}
It should be noted that purity is a notion that applies to both normalised and unnormalised processes: it is simply a statement of extremality in the convex cone of processes from a fixed input system to a fixed output system. When the process is normalised, the $\reals^+$-valued coefficients will sum to 1, and $\reals^+$-linear combinations reduce to the usual notion of probabilistic mixtures.

\paragraph{Time-reversal.} From a compositional perspective, the action of a generic notion of time-reversal can be described as some way of sending processes to other processes which have inputs and outputs swapped, while at the same time respecting their sequential/parallel compositional structure and their probabilistic structure. Categorically, this is captured by asking that time-reversal is a contravariant $\reals^+$-linear monoidal functor on the CPT, i.e. a $\reals^+$-linear dagger functor (in string diagram language) which coincides with the transpose on classical processes. Extremely relevant to this work is the duality established by any such notion of time-reversal between normalised processes and unital processes. In a CPT with time-reversal (as described above), a \emph{unital} process is one satisfying the following equation (where $M^\dagger$ is the time-reverse of $M$):
\begin{equation}\label{processUnital}
	\begin{tikzpicture}
	\begin{pgfonlayer}{nodelayer}
		\node [style=none] (1) at (-5, 0.25) {};
		\node [style=box] (3) at (-4.5, 0) {$M$};
		\node [style=none] (4) at (-5, -0.25) {};
		\node [style=none] (5) at (-4, 0.25) {};
		\node [style=none] (6) at (-4, -0.25) {};
		\node [style=none] (7) at (0, 0) {$=$};
		\node [style=none] (10) at (3.5, -0.25) {};
		\node [style=none] (11) at (-2.5, 0.25) {};
		\node [style=none] (12) at (-2.5, -0.25) {};
		\node [style=none] (13) at (3.5, 0.25) {};
	\end{pgfonlayer}
	\begin{pgfonlayer}{edgelayer}
		\node [style=cotrace] (0) at (-7.5, -0.25) {};
		\node [style=cotrace] (2) at (-6.5, 0.25) {};
		\node [style=cotrace] (8) at (1, -0.25) {};
		\node [style=cotrace] (9) at (2, 0.25) {};
		\draw [style=-] (2) to (1.center);
		\draw [style=-, in=180, out=0] (5.center) to (11.center);
		\draw [style=-, in=180, out=0] (9) to (13.center);
		\draw [style=-, line width=4 pt, draw=white] (6.center) to (12.center);
		\draw [style=dashed, in=180, out=0] (6.center) to (12.center);
		\draw [style=-, line width=4 pt, draw=white] (8) to (10.center);
		\draw [style=dashed, in=180, out=0] (8) to (10.center);
		\draw [style=-, line width=4 pt, draw=white] (0) to (4.center);
		\draw [style=dashed] (0) to (4.center);
	\end{pgfonlayer}
\end{tikzpicture}
\end{equation}
The reversed discarding maps are (up to proportionality factor) the uniform probability distribution (on classical systems) and the maximally mixed state (on generic systems). The relationship between normalised and unital processes established by time-reversal can then be summarised as follows:
\begin{equation}\label{processNormalisedAndDagger}
	\resizebox{0.9\textwidth}{!}{$
		\begin{tikzpicture}
	\begin{pgfonlayer}{nodelayer}
		\node [style=none] (0) at (-7.5, -0.25) {};
		\node [style=none] (1) at (-6, 0.25) {};
		\node [style=none] (2) at (-7.5, 0.25) {};
		\node [style=box] (3) at (-5.5, 0) {$M$};
		\node [style=none] (4) at (-6, -0.25) {};
		\node [style=none] (6) at (-5, 0.25) {};
		\node [style=none] (8) at (-5, -0.25) {};
		\node [style=none] (9) at (0, 0) {$=$};
		\node [style=none] (11) at (1, -0.25) {};
		\node [style=none] (12) at (1, 0.25) {};
	\end{pgfonlayer}
	\begin{pgfonlayer}{edgelayer}
		\node [style=trace] (13) at (3.5, -0.25) {};
		\node [style=trace] (5) at (-3.5, 0.25) {};
		\node [style=trace] (7) at (-2.5, -0.25) {};
		\node [style=trace] (10) at (2.5, 0.25) {};
		\draw [style=dashed] (0.center) to (4.center);
		\draw [style=-] (2.center) to (1.center);
		\draw [style=-, in=180, out=0] (6.center) to (5);
		\draw [style=-, in=180, out=0] (12.center) to (10);
		\draw [style=-, line width=4pt, draw=white] (8.center) to (7);
		\draw [style=-, line width=4pt, draw=white] (11.center) to (13);
		\draw [style=dashed, in=180, out=0] (8.center) to (7);
		\draw [style=dashed, in=180, out=0] (11.center) to (13);
	\end{pgfonlayer}
\end{tikzpicture}
		\hspace{10mm}\Leftrightarrow\hspace{10mm}
		\begin{tikzpicture}
	\begin{pgfonlayer}{nodelayer}
		\node [style=none] (1) at (-5, 0.25) {};
		\node [style=box] (3) at (-4.5, 0) {$M^\dagger$};
		\node [style=none] (4) at (-5, -0.25) {};
		\node [style=none] (5) at (-4, 0.25) {};
		\node [style=none] (6) at (-4, -0.25) {};
		\node [style=none] (7) at (0, 0) {$=$};
		\node [style=none] (10) at (3.5, -0.25) {};
		\node [style=none] (11) at (-2.5, 0.25) {};
		\node [style=none] (12) at (-2.5, -0.25) {};
		\node [style=none] (13) at (3.5, 0.25) {};
	\end{pgfonlayer}
	\begin{pgfonlayer}{edgelayer}
		\node [style=cotrace] (0) at (-7.5, -0.25) {};
		\node [style=cotrace] (2) at (-6.5, 0.25) {};
		\node [style=cotrace] (8) at (1, -0.25) {};
		\node [style=cotrace] (9) at (2, 0.25) {};
		\draw [style=-] (2) to (1.center);
		\draw [style=-, in=180, out=0] (5.center) to (11.center);
		\draw [style=-, in=180, out=0] (9) to (13.center);
		\draw [style=-, line width=4 pt, draw=white] (6.center) to (12.center);
		\draw [style=dashed, in=180, out=0] (6.center) to (12.center);
		\draw [style=-, line width=4 pt, draw=white] (8) to (10.center);
		\draw [style=dashed, in=180, out=0] (8) to (10.center);
		\draw [style=-, line width=4 pt, draw=white] (0) to (4.center);
		\draw [style=dashed] (0) to (4.center);
	\end{pgfonlayer}
\end{tikzpicture}
	$}
\end{equation}
In particular, the normalised processes which are invariant under time-reversal are exactly those which are both normalised and unital, such as the RaRe introduced by \cite{Chiribella2015}.

\section{Main result}

\paragraph{The distinguishing task.} Consider the following game between $N$ players, each player $j$ being in possession of a finite-dimensional quantum system $\SpaceH_j$. The players share a pure quantum state $\psi_b$, which they are guaranteed to be drawn at random from an orthonormal basis $(\psi_b)_{b \in B}$ of the joint quantum system $\bigotimes_{j=1}^N \SpaceH_j$. Here $B$ is a finite set of labels, with cardinality $|B| = \prod_{j=1}^N \dim{\SpaceH_j}$). They are tasked to identify the state with certainty using only local operations and classical communication. Our main result will state that the presence of even a single entangled state in the basis makes the task certainly impossible. Note that the task might be impossible even when the basis states are product states, cf. \cite{Betal99}.

\begin{theorem}[\textbf{Time-reversal and purity deny entanglement}]\label{thm_TRandPdenyE}
	If the players can succeed in the state distinguishing task by using any LOCC protocol (possibly involving unnormalised instruments/operations), then the complete orthonormal family $(\psi_b)_{b \in B}$ cannot contain any entangled states. Conversely, if the complete orthonormal family $(\psi_b)_{b \in B}$ contains only product states, then the players can succeed in the state distinguishing task using a unital LOCC protocol, although the latter might involve unnormalised instruments/operations.
\end{theorem}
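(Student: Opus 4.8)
The plan is to extract both directions from the duality that time-reversal establishes between \emph{state-preparation} protocols and \emph{state-distinguishing} protocols, with purity supplying the one decisive extra ingredient, needed only in the forward direction. The slogan: a distinguishing protocol read backwards is a preparation protocol; a preparation protocol of LOCC shape can only output $\reals^+$-combinations of product states; and a \emph{pure} output of such a protocol is therefore forced to be an honest product state.

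For the forward direction I would first repackage the hypothesis. Discarding any residual outputs, a successful LOCC protocol yields a process $L \colon \SpaceH \to B$ of LOCC shape as in \eqref{LOCCinstrument}, where $\SpaceH := \bigotimes_{j=1}^N \SpaceH_j$, with $L \circ \psi_b = \delta_b$ for every $b$ ($\delta_b$ the point distribution at $b$); equivalently, writing $L_b := \bra{b} \circ L$ for the post-selected effects, $L_b \circ \psi_c = \delta_{bc}$ as scalars. Apply time-reversal, which is an $\reals^+$-linear dagger functor restricting to the transpose on classical processes: each $L_b^\dagger \colon \singletonSet \to \SpaceH$ is a state with $\psi_c^\dagger \circ L_b^\dagger = \delta_{bc}$. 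Completeness of the family reads $\sum_c \psi_c^\dagger = \trace{\SpaceH}$, so $\trace{\SpaceH} \circ L_b^\dagger = \sum_c \psi_c^\dagger \circ L_b^\dagger = 1$, i.e.\ $L_b^\dagger$ is normalised; as it moreover satisfies $\psi_b^\dagger \circ L_b^\dagger = 1$ against the pure effect $\psi_b^\dagger$, the rigidity of a complete orthonormal family forces $L_b^\dagger = \psi_b$. So the time-reversed protocol $L^\dagger \colon B \to \SpaceH$ prepares $\psi_b$ on classical input $\delta_b$.

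The punchline is diagrammatic. As the image of an LOCC-shaped process under a monoidal dagger functor that is the transpose on classical systems, $L^\dagger$ again has LOCC shape: a product $M_1^\dagger \otimes \cdots \otimes M_N^\dagger$ of local operations sandwiched between global classical operations, with the non-classical wires still passing overhead. Evaluating on $\delta_b$ and resolving every internal classical wire into point distributions via \eqref{idClassicalResolution} rewrites $\psi_b = L^\dagger \circ \delta_b$ as a finite sum, carrying non-negative scalar coefficients (the matrix entries of the global classical operations), of tensor products $\bigotimes_j \phi_j$ of the corresponding local outputs --- that is, as a non-negative $\reals^+$-linear combination of product states of $\SpaceH$. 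Purity then closes the argument: $\psi_b$ is extremal in the convex cone of states of $\SpaceH$, so every summand of this decomposition is proportional to $\psi_b$; picking a summand with non-zero coefficient exhibits $\psi_b$ as a rescaled product state, hence as a product state.

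For the converse I would run the same chain in reverse and keep it brief. If $\psi_b = \bigotimes_{j=1}^N \phi_j^{(b)}$ for all $b$, let $P \colon B \to \SpaceH$ broadcast the label $b$ to all parties via a classical copy map and then have party $j$ prepare $\phi_j^{(b)}$; this is normalised, of LOCC shape, and $P \circ \delta_b = \psi_b$. Its time-reverse $P^\dagger \colon \SpaceH \to B$ has LOCC shape and is unital, since the time-reverse of a normalised process is unital and unitality is closed under composition and tensor. It solves the task: $\bra{b} \circ P^\dagger = (P \circ \delta_b)^\dagger = \psi_b^\dagger$, hence on input $\psi_c$ the classical output is $P^\dagger \circ \psi_c = \sum_b (\psi_b^\dagger \circ \psi_c)\, \delta_b = \delta_c$ by orthonormality. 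The local components of $P^\dagger$ are the time-reverses $\phi_j^{(\cdot)\dagger}$ of local preparations --- unital effects, not normalised local instruments --- which is exactly the ``possibly unnormalised instruments/operations'' hedge in the statement. I expect the real work to lie in the diagram surgery behind the forward punchline: checking that daggering the template \eqref{LOCCinstrument} preserves both the ``product sandwiched between global classical operations'' shape and the ``non-classical wires overhead'' feature, and that conditioning the bypassing quantum output on all internal classical data genuinely returns a product state weighted by a non-negative scalar. By contrast the completeness/rigidity step identifying $L_b^\dagger$ with $\psi_b$, and the one-line extremality step from purity, are short once the shapes are pinned down.
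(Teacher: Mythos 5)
Your proposal is correct and takes essentially the same route as the paper's proof: time-reverse the LOCC protocol so that the distinguishing condition becomes a state-preparation condition, insert classical resolutions of the identity on all classical wires to exhibit each $\psi_b$ as an $\reals^+$-combination of product states, and invoke purity (extremality) to force $\psi_b$ to be a product state, with the converse handled by the same prepare-then-time-reverse construction as in the paper. The only notable difference is that you make explicit, via completeness and the fidelity-one rigidity of pure states, the passage from the daggered success condition to the preparation condition, a step the paper asserts without comment.
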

\begin{proof}
We begin by considering a generic LOCC protocol (possibly involving unnormalised instruments/operations) that the player might be using to accomplish their distinguishing task. One such protocol would include some number $R>0$ of rounds, each round $r$ consisting of a global classical operation shared amongst the parties, followed by local instruments $M_i^{(r)}$ performed by the individual parties $i=1,...,N$. For the first round, we just assume this to a be a shared global classical state, with no inputs. After the last round, a global post-processing operation is applied to the classical outputs of the local instruments, producing a classical output in the set $B$ which identifies the state that the players believe they were given. Overall, the protocol takes the following shape:
\begin{equation}\label{LOCCprotocol}
	\begin{tikzpicture}
	\begin{pgfonlayer}{nodelayer}
		\node [style=box] (0) at (-5.5, 2.25) {$M_{1}^{(1)}$};
		\node [style=box] (1) at (-5.5, -0.5) {$M_{N-1}^{(1)}$};
		\node [style=box] (2) at (-5.5, -2.5) {$M_{N}^{(1)}$};
		\node [style=none] (3) at (-8, 2) {};
		\node [style=none] (4) at (-6, 2) {};
		\node [style=none] (5) at (-6, -0.75) {};
		\node [style=none] (6) at (-8, -0.75) {};
		\node [style=none] (7) at (-6, -2.75) {};
		\node [style=none] (8) at (-8, -2.75) {};
		\node [style=labelnode] (9) at (-11, 2.5) {$\SpaceH_1$};
		\node [style=none] (10) at (-6, 2.5) {};
		\node [style=none] (11) at (-6, -0.25) {};
		\node [style=labelnode] (12) at (-11, -0.25) {$\SpaceH_{N-1}$};
		\node [style=none] (13) at (-6, -2.25) {};
		\node [style=labelnode] (14) at (-11, -2.25) {$\SpaceH_{N}$};
		\node [style=none] (15) at (-5.5, 1) {$\vdots$};
		\node [style=none] (16) at (-2, -2.75) {};
		\node [style=none] (17) at (0, -0.25) {};
		\node [style=box] (18) at (0.5, -2.5) {$M_{N}^{(2)}$};
		\node [style=none] (19) at (0, -2.25) {};
		\node [style=none] (20) at (-2, -0.75) {};
		\node [style=none] (21) at (-2, 2) {};
		\node [style=none] (22) at (0, 2.5) {};
		\node [style=none] (23) at (0, -2.75) {};
		\node [style=none] (24) at (-5, -0.25) {};
		\node [style=none] (25) at (-5, 2.5) {};
		\node [style=box] (26) at (0.5, -0.5) {$M_{N-1}^{(2)}$};
		\node [style=none] (27) at (0, -0.75) {};
		\node [style=none] (28) at (0.5, 1) {$\vdots$};
		\node [style=box] (29) at (0.5, 2.25) {$M_{1}^{(2)}$};
		\node [style=none] (30) at (-5, -2.25) {};
		\node [style=none] (31) at (0, 2) {};
		\node [style=none] (32) at (-3, 2) {};
		\node [style=none] (33) at (-5, 2) {};
		\node [style=none] (34) at (-3, -0.75) {};
		\node [style=none] (35) at (-5, -0.75) {};
		\node [style=none] (36) at (-3, -2.75) {};
		\node [style=none] (37) at (-5, -2.75) {};
		\node [style=none] (38) at (2.5, -0.25) {};
		\node [style=none] (39) at (1, -0.25) {};
		\node [style=none] (40) at (2.5, -2.25) {};
		\node [style=none] (41) at (1, -2.25) {};
		\node [style=none] (42) at (2.5, 2.5) {};
		\node [style=none] (43) at (1, 2.5) {};
		\node [style=none] (44) at (2.5, 2) {};
		\node [style=none] (45) at (1, 2) {};
		\node [style=none] (46) at (2.5, -0.75) {};
		\node [style=none] (47) at (1, -0.75) {};
		\node [style=none] (48) at (2.5, -2.75) {};
		\node [style=none] (49) at (1, -2.75) {};
		\node [style=none] (50) at (3.5, 2.25) {$\dots$};
		\node [style=none] (51) at (3.5, -0.5) {$\dots$};
		\node [style=none] (52) at (3.5, -2.5) {$\dots$};
		\node [style=none] (53) at (4.5, 2.5) {};
		\node [style=none] (54) at (6, -2.25) {};
		\node [style=none] (55) at (4.5, -2.75) {};
		\node [style=none] (56) at (6, -2.75) {};
		\node [style=none] (57) at (6, 2) {};
		\node [style=none] (58) at (6, -0.75) {};
		\node [style=none] (59) at (4.5, -0.25) {};
		\node [style=none] (60) at (4.5, -0.75) {};
		\node [style=none] (61) at (6, -0.25) {};
		\node [style=none] (62) at (6, 2.5) {};
		\node [style=none] (63) at (4.5, -2.25) {};
		\node [style=none] (64) at (4.5, 2) {};
		\node [style=none] (65) at (9, 2.25) {};
		\node [style=none] (66) at (9, -2.5) {};
		\node [style=none] (67) at (9, -0.5) {};
		\node [style=box] (68) at (6.5, -0.5) {$M_{N-1}^{(R)}$};
		\node [style=none] (69) at (7, -0.5) {};
		\node [style=none] (70) at (7, -2.5) {};
		\node [style=box] (71) at (6.5, -2.5) {$M_{N}^{(R)}$};
		\node [style=box] (72) at (6.5, 2.25) {$M_{1}^{(R)}$};
		\node [style=none] (73) at (6.5, 1) {$\vdots$};
		\node [style=none] (74) at (7, 2.25) {};
		\node [style=labelnode] (75) at (12, 0) {$B$};
		\node [style=none] (76) at (10, 0) {};
		\node [style=none] (77) at (-11, 1) {$\vdots$};
		\node [style=none] (78) at (9.5, 3.75) {};
		\node [style=tightlabelnode, text=gray] (79) at (3.5, 6) {Global classical operations};
		\node [style=none] (80) at (-2.5, 3.75) {};
		\node [style=none] (81) at (3.5, 4) {$\dots$};
		\node [style=tightlabelnode, text=gray] (82) at (0.5, -6) {Local instruments};
		\node [style=none] (83) at (0.5, -3.75) {};
		\node [style=none] (84) at (6.5, -3.75) {};
		\node [style=none] (85) at (-5.5, -3.75) {};
		\node [style=none] (86) at (3.5, -4) {$\dots$};
		\node [style=tightlabelnode, text=gray] (87) at (-8.5, 6) {Shared global classical state};
		\node [style=none] (88) at (-8.5, 3.75) {};
	\end{pgfonlayer}
	\begin{pgfonlayer}{edgelayer}
		\node [style=boxLargerThin, fill=gray!25] (89) at (9.5, 0) {};
		\node [style=stateLargerThin, fill=gray!25] (90) at (-8.5, 0) {};
		\node [style=boxLargerThin, fill=gray!25] (91) at (-2.5, 0) {};
		\draw [style=dashed] (3.center) to (4.center);
		\draw [style=dashed] (6.center) to (5.center);
		\draw [style=dashed, in=180, out=0] (8.center) to (7.center);
		\draw [style=-, line width=4 pt, draw=white, in=180, out=0] (9) to (10.center);
		\draw [style=-] (9) to (10.center);
		\draw [style=-, line width=4 pt, draw=white, in=180, out=0] (12) to (11.center);
		\draw [style=-] (12) to (11.center);
		\draw [style=-, line width=4 pt, draw=white, in=180, out=0] (14) to (13.center);
		\draw [style=-] (14) to (13.center);
		\draw [style=dashed] (21.center) to (31.center);
		\draw [style=dashed] (20.center) to (27.center);
		\draw [style=dashed, in=180, out=0] (16.center) to (23.center);
		\draw [style=-, line width=4 pt, draw=white, in=180, out=0] (25.center) to (22.center);
		\draw [style=-] (25.center) to (22.center);
		\draw [style=-, line width=4 pt, draw=white, in=180, out=0] (24.center) to (17.center);
		\draw [style=-] (24.center) to (17.center);
		\draw [style=-, line width=4 pt, draw=white, in=180, out=0] (30.center) to (19.center);
		\draw [style=-] (30.center) to (19.center);
		\draw [style=dashed] (33.center) to (32.center);
		\draw [style=dashed] (35.center) to (34.center);
		\draw [style=dashed] (37.center) to (36.center);
		\draw [style=-, line width=4 pt, draw=white, in=180, out=0] (43.center) to (42.center);
		\draw [style=-] (43.center) to (42.center);
		\draw [style=-, line width=4 pt, draw=white, in=180, out=0] (39.center) to (38.center);
		\draw [style=-] (39.center) to (38.center);
		\draw [style=-, line width=4 pt, draw=white, in=180, out=0] (41.center) to (40.center);
		\draw [style=-] (41.center) to (40.center);
		\draw [style=dashed] (45.center) to (44.center);
		\draw [style=dashed] (47.center) to (46.center);
		\draw [style=dashed] (49.center) to (48.center);
		\draw [style=-, line width=4 pt, draw=white, in=180, out=0] (53.center) to (62.center);
		\draw [style=-] (53.center) to (62.center);
		\draw [style=-, line width=4 pt, draw=white, in=180, out=0] (59.center) to (61.center);
		\draw [style=-] (59.center) to (61.center);
		\draw [style=-, line width=4 pt, draw=white, in=180, out=0] (63.center) to (54.center);
		\draw [style=-] (63.center) to (54.center);
		\draw [style=dashed] (64.center) to (57.center);
		\draw [style=dashed] (60.center) to (58.center);
		\draw [style=dashed] (55.center) to (56.center);
		\draw [style=dashed] (74.center) to (65.center);
		\draw [style=dashed] (69.center) to (67.center);
		\draw [style=dashed] (70.center) to (66.center);
		\draw [style=dashed] (76.center) to (75);
		\draw [style=->, draw=gray, in=90, out=-90, looseness=0.75] (79) to (78.center);
		\draw [style=->, draw=gray, in=90, out=-90, looseness=0.75] (79) to (80.center);
		\draw [style=->, draw=gray, in=-90, out=90, looseness=0.75] (82) to (85.center);
		\draw [style=->, draw=gray, in=-90, out=90] (82) to (83.center);
		\draw [style=->, draw=gray, in=-90, out=90, looseness=0.75] (82) to (84.center);
		\draw [style=->, draw=gray, in=90, out=-90, looseness=0.50] (87) to (88.center);
	\end{pgfonlayer}
\end{tikzpicture}
\end{equation}
The protocol implements the desired state distinguishing task if and only if the following is true for every choice $\psi_b$ of shared initial state in the complete orthogonal family $(\psi_b)_{b \in B}$:
\begin{equation}\label{LOCCprotocolCondition}
	\resizebox{0.9\textwidth}{!}{\begin{tikzpicture}
	\begin{pgfonlayer}{nodelayer}
		\node [style=box] (0) at (-8, 2.25) {$M_{1}^{(1)}$};
		\node [style=box] (1) at (-8, -0.5) {$M_{N-1}^{(1)}$};
		\node [style=box] (2) at (-8, -2.5) {$M_{N}^{(1)}$};
		\node [style=none] (3) at (-10.5, 2) {};
		\node [style=none] (4) at (-8.5, 2) {};
		\node [style=none] (5) at (-8.5, -0.75) {};
		\node [style=none] (6) at (-10.5, -0.75) {};
		\node [style=none] (7) at (-8.5, -2.75) {};
		\node [style=none] (8) at (-10.5, -2.75) {};
		\node [style=none] (9) at (-13.5, 2.5) {};
		\node [style=none] (10) at (-8.5, 2.5) {};
		\node [style=none] (11) at (-8.5, -0.25) {};
		\node [style=none] (12) at (-13.5, -0.25) {};
		\node [style=none] (13) at (-8.5, -2.25) {};
		\node [style=none] (14) at (-13.5, -2.25) {};
		\node [style=none] (15) at (-8, 1) {$\vdots$};
		\node [style=none] (16) at (-4.5, -2.75) {};
		\node [style=none] (17) at (-2.5, -0.25) {};
		\node [style=box] (18) at (-2, -2.5) {$M_{N}^{(2)}$};
		\node [style=none] (19) at (-2.5, -2.25) {};
		\node [style=none] (20) at (-4.5, -0.75) {};
		\node [style=none] (21) at (-4.5, 2) {};
		\node [style=none] (22) at (-2.5, 2.5) {};
		\node [style=none] (23) at (-2.5, -2.75) {};
		\node [style=none] (24) at (-7.5, -0.25) {};
		\node [style=none] (25) at (-7.5, 2.5) {};
		\node [style=box] (26) at (-2, -0.5) {$M_{N-1}^{(2)}$};
		\node [style=none] (27) at (-2.5, -0.75) {};
		\node [style=none] (28) at (-2, 1) {$\vdots$};
		\node [style=box] (29) at (-2, 2.25) {$M_{1}^{(2)}$};
		\node [style=none] (30) at (-7.5, -2.25) {};
		\node [style=none] (31) at (-2.5, 2) {};
		\node [style=none] (32) at (-5.5, 2) {};
		\node [style=none] (33) at (-7.5, 2) {};
		\node [style=none] (34) at (-5.5, -0.75) {};
		\node [style=none] (35) at (-7.5, -0.75) {};
		\node [style=none] (36) at (-5.5, -2.75) {};
		\node [style=none] (37) at (-7.5, -2.75) {};
		\node [style=none] (38) at (0, -0.25) {};
		\node [style=none] (39) at (-1.5, -0.25) {};
		\node [style=none] (40) at (0, -2.25) {};
		\node [style=none] (41) at (-1.5, -2.25) {};
		\node [style=none] (42) at (0, 2.5) {};
		\node [style=none] (43) at (-1.5, 2.5) {};
		\node [style=none] (44) at (0, 2) {};
		\node [style=none] (45) at (-1.5, 2) {};
		\node [style=none] (46) at (0, -0.75) {};
		\node [style=none] (47) at (-1.5, -0.75) {};
		\node [style=none] (48) at (0, -2.75) {};
		\node [style=none] (49) at (-1.5, -2.75) {};
		\node [style=none] (50) at (1, 2.25) {$\dots$};
		\node [style=none] (51) at (1, -0.5) {$\dots$};
		\node [style=none] (52) at (1, -2.5) {$\dots$};
		\node [style=none] (53) at (2, 2.5) {};
		\node [style=none] (54) at (3.5, -2.25) {};
		\node [style=none] (55) at (2, -2.75) {};
		\node [style=none] (56) at (3.5, -2.75) {};
		\node [style=none] (57) at (3.5, 2) {};
		\node [style=none] (58) at (3.5, -0.75) {};
		\node [style=none] (59) at (2, -0.25) {};
		\node [style=none] (60) at (2, -0.75) {};
		\node [style=none] (61) at (3.5, -0.25) {};
		\node [style=none] (62) at (3.5, 2.5) {};
		\node [style=none] (63) at (2, -2.25) {};
		\node [style=none] (64) at (2, 2) {};
		\node [style=none] (65) at (6.5, 2.25) {};
		\node [style=none] (66) at (6.5, -2.5) {};
		\node [style=none] (67) at (6.5, -0.5) {};
		\node [style=box] (68) at (4, -0.5) {$M_{N-1}^{(R)}$};
		\node [style=none] (69) at (4.5, -0.5) {};
		\node [style=none] (70) at (4.5, -2.5) {};
		\node [style=box] (71) at (4, -2.5) {$M_{N}^{(R)}$};
		\node [style=box] (72) at (4, 2.25) {$M_{1}^{(R)}$};
		\node [style=none] (73) at (4, 1) {$\vdots$};
		\node [style=none] (74) at (4.5, 2.25) {};
		\node [style=labelnode] (75) at (9.5, 0) {$B$};
		\node [style=none] (76) at (7.5, 0) {};
		\node [style=none] (77) at (-12.5, 1) {$\vdots$};
		\node [style=stateLargerThin] (78) at (-14, 0) {$\psi_b$};
		\node [style=labelnode] (79) at (11, 0) {$=$};
		\node [style=state] (80) at (13, 0) {$b$};
		\node [style=tightlabelnode] (81) at (15.5, 0) {$B$};
	\end{pgfonlayer}
	\begin{pgfonlayer}{edgelayer}
		\node [style=boxLargerThin, fill=gray!25] (82) at (7, 0) {};
		\node [style=stateLargerThin, fill=gray!25] (83) at (-11, 0) {};
		\node [style=boxLargerThin, fill=gray!25] (84) at (-5, 0) {};
		\draw [style=dashed] (3.center) to (4.center);
		\draw [style=dashed] (6.center) to (5.center);
		\draw [style=dashed, in=180, out=0] (8.center) to (7.center);
		\draw [style=-, line width=4 pt, draw=white, in=180, out=0] (9.center) to (10.center);
		\draw [style=-] (9.center) to (10.center);
		\draw [style=-, line width=4 pt, draw=white, in=180, out=0] (12.center) to (11.center);
		\draw [style=-] (12.center) to (11.center);
		\draw [style=-, line width=4 pt, draw=white, in=180, out=0] (14.center) to (13.center);
		\draw [style=-] (14.center) to (13.center);
		\draw [style=dashed] (21.center) to (31.center);
		\draw [style=dashed] (20.center) to (27.center);
		\draw [style=dashed, in=180, out=0] (16.center) to (23.center);
		\draw [style=-, line width=4 pt, draw=white, in=180, out=0] (25.center) to (22.center);
		\draw [style=-] (25.center) to (22.center);
		\draw [style=-, line width=4 pt, draw=white, in=180, out=0] (24.center) to (17.center);
		\draw [style=-] (24.center) to (17.center);
		\draw [style=-, line width=4 pt, draw=white, in=180, out=0] (30.center) to (19.center);
		\draw [style=-] (30.center) to (19.center);
		\draw [style=dashed] (33.center) to (32.center);
		\draw [style=dashed] (35.center) to (34.center);
		\draw [style=dashed] (37.center) to (36.center);
		\draw [style=-, line width=4 pt, draw=white, in=180, out=0] (43.center) to (42.center);
		\draw [style=-] (43.center) to (42.center);
		\draw [style=-, line width=4 pt, draw=white, in=180, out=0] (39.center) to (38.center);
		\draw [style=-] (39.center) to (38.center);
		\draw [style=-, line width=4 pt, draw=white, in=180, out=0] (41.center) to (40.center);
		\draw [style=-] (41.center) to (40.center);
		\draw [style=dashed] (45.center) to (44.center);
		\draw [style=dashed] (47.center) to (46.center);
		\draw [style=dashed] (49.center) to (48.center);
		\draw [style=-, line width=4 pt, draw=white, in=180, out=0] (53.center) to (62.center);
		\draw [style=-] (53.center) to (62.center);
		\draw [style=-, line width=4 pt, draw=white, in=180, out=0] (59.center) to (61.center);
		\draw [style=-] (59.center) to (61.center);
		\draw [style=-, line width=4 pt, draw=white, in=180, out=0] (63.center) to (54.center);
		\draw [style=-] (63.center) to (54.center);
		\draw [style=dashed] (64.center) to (57.center);
		\draw [style=dashed] (60.center) to (58.center);
		\draw [style=dashed] (55.center) to (56.center);
		\draw [style=dashed] (74.center) to (65.center);
		\draw [style=dashed] (69.center) to (67.center);
		\draw [style=dashed] (70.center) to (66.center);
		\draw [style=dashed] (76.center) to (75);
		\draw [style=dashed] (80) to (81);
	\end{pgfonlayer}
\end{tikzpicture}}
\end{equation}
Diagram \ref{LOCCprotocolCondition} above shows the situation in which the players apply their $R$-round LOCC protocol to the initial state $\psi_b$: the correctness requirement for the protocol is captured by the fact that this process deterministically results in the correct label $b \in B$. Now we consider the time-reversal of the entire protocol used by the players, i.e. we take the dagger of all global classical operations (which are simply transposed) and local instruments (which are transformed in some other way, depending on the exact choice of dagger functor implementing the time-reversal):
\begin{equation}\label{LOCCprotocolTimeReversed}
	\begin{tikzpicture}
	\begin{pgfonlayer}{nodelayer}
		\node [style=box] (0) at (4, 2.25) {\small$\big(M_{1}^{(1)}\big)^\dagger$};
		\node [style=box] (1) at (4, -0.5) {\small$\big(M_{N-1}^{(1)}\big)^\dagger$};
		\node [style=box] (2) at (4, -2.5) {\small$\big(M_{N}^{(1)}\big)^\dagger$};
		\node [style=none] (3) at (4.5, 2) {};
		\node [style=none] (4) at (6.5, 2) {};
		\node [style=none] (5) at (6.5, -0.75) {};
		\node [style=none] (6) at (4.5, -0.75) {};
		\node [style=none] (7) at (6.5, -2.75) {};
		\node [style=none] (8) at (4.5, -2.75) {};
		\node [style=labelnode] (9) at (9.5, 2.5) {$\SpaceH_1$};
		\node [style=none] (10) at (4.5, 2.5) {};
		\node [style=none] (11) at (4.5, -0.25) {};
		\node [style=labelnode] (12) at (9.5, -0.25) {$\SpaceH_{N-1}$};
		\node [style=none] (13) at (4.5, -2.25) {};
		\node [style=labelnode] (14) at (9.5, -2.25) {$\SpaceH_{N}$};
		\node [style=none] (15) at (4, 1) {$\vdots$};
		\node [style=none] (16) at (1.5, -2.75) {};
		\node [style=none] (17) at (3.5, -0.25) {};
		\node [style=box] (18) at (-2, -2.5) {\small$\big(M_{N}^{(2)}\big)^\dagger$};
		\node [style=none] (19) at (3.5, -2.25) {};
		\node [style=none] (20) at (1.5, -0.75) {};
		\node [style=none] (21) at (1.5, 2) {};
		\node [style=none] (22) at (3.5, 2.5) {};
		\node [style=none] (23) at (3.5, -2.75) {};
		\node [style=none] (24) at (-1.5, -0.25) {};
		\node [style=none] (25) at (-1.5, 2.5) {};
		\node [style=box] (26) at (-2, -0.5) {\small$\big(M_{N-1}^{(2)}\big)^\dagger$};
		\node [style=none] (27) at (3.5, -0.75) {};
		\node [style=none] (28) at (-2, 1) {$\vdots$};
		\node [style=box] (29) at (-2, 2.25) {\small$\big(M_{1}^{(2)}\big)^\dagger$};
		\node [style=none] (30) at (-1.5, -2.25) {};
		\node [style=none] (31) at (3.5, 2) {};
		\node [style=none] (32) at (0.5, 2) {};
		\node [style=none] (33) at (-1.5, 2) {};
		\node [style=none] (34) at (0.5, -0.75) {};
		\node [style=none] (35) at (-1.5, -0.75) {};
		\node [style=none] (36) at (0.5, -2.75) {};
		\node [style=none] (37) at (-1.5, -2.75) {};
		\node [style=none] (38) at (-5.75, -0.25) {};
		\node [style=none] (39) at (-7.5, -0.25) {};
		\node [style=none] (40) at (-5.75, -2.25) {};
		\node [style=none] (41) at (-7.5, -2.25) {};
		\node [style=none] (42) at (-5.75, 2.5) {};
		\node [style=none] (43) at (-7.5, 2.5) {};
		\node [style=none] (44) at (-5.75, 2) {};
		\node [style=none] (45) at (-7.5, 2) {};
		\node [style=none] (46) at (-5.75, -0.75) {};
		\node [style=none] (47) at (-7.5, -0.75) {};
		\node [style=none] (48) at (-5.75, -2.75) {};
		\node [style=none] (49) at (-7.5, -2.75) {};
		\node [style=none] (50) at (-5, 2.25) {$\dots$};
		\node [style=none] (51) at (-5, -0.5) {$\dots$};
		\node [style=none] (52) at (-5, -2.5) {$\dots$};
		\node [style=none] (53) at (-4.25, 2.5) {};
		\node [style=none] (54) at (-2.5, -2.25) {};
		\node [style=none] (55) at (-4.25, -2.75) {};
		\node [style=none] (56) at (-2.5, -2.75) {};
		\node [style=none] (57) at (-2.5, 2) {};
		\node [style=none] (58) at (-2.5, -0.75) {};
		\node [style=none] (59) at (-4.25, -0.25) {};
		\node [style=none] (60) at (-4.25, -0.75) {};
		\node [style=none] (61) at (-2.5, -0.25) {};
		\node [style=none] (62) at (-2.5, 2.5) {};
		\node [style=none] (63) at (-4.25, -2.25) {};
		\node [style=none] (64) at (-4.25, 2) {};
		\node [style=none] (65) at (9.5, 1) {$\vdots$};
		\node [style=none] (66) at (-11.5, 0) {};
		\node [style=labelnode] (67) at (-13.5, 0) {$B$};
		\node [style=box] (68) at (-8, -0.5) {\small$\big(M_{N-1}^{(R)}\big)^\dagger$};
		\node [style=none] (69) at (-8, 1) {$\vdots$};
		\node [style=none] (70) at (-10.5, 2.25) {};
		\node [style=none] (71) at (-8.5, -2.5) {};
		\node [style=none] (72) at (-10.5, -0.5) {};
		\node [style=box] (73) at (-8, -2.5) {\small$\big(M_{N}^{(R)}\big)^\dagger$};
		\node [style=none] (74) at (-10.5, -2.5) {};
		\node [style=none] (75) at (-8.5, -0.5) {};
		\node [style=box] (76) at (-8, 2.25) {\small$\big(M_{1}^{(R)}\big)^\dagger$};
		\node [style=none] (77) at (-8.5, 2.25) {};
		\node [style=none] (78) at (1, 3.75) {};
		\node [style=tightlabelnode, text=gray] (79) at (-5, 6) {Time-reversed global classical operations};
		\node [style=none] (80) at (-11, 3.75) {};
		\node [style=none] (81) at (-5, 4) {$\dots$};
		\node [style=tightlabelnode, text=gray] (82) at (-2, -6) {Time-reversed local instruments};
		\node [style=none] (83) at (-2, -3.75) {};
		\node [style=none] (84) at (4, -3.75) {};
		\node [style=none] (85) at (-8, -3.75) {};
		\node [style=none] (86) at (1, -4) {$\dots$};
		\node [style=tightlabelnode, text=gray] (87) at (7, 6.5) {(now an effect)};
		\node [style=none] (88) at (7, 3.75) {};
		\node [style=tightlabelnode, text=gray] (89) at (7, 7.5) {Time-reversed shared classical state };
	\end{pgfonlayer}
	\begin{pgfonlayer}{edgelayer}
		\node [style=boxLargerThin, fill=gray!25] (90) at (1, 0) {};
		\node [style=boxLargerThin, fill=gray!25] (91) at (-11, 0) {};
		\node [style=effectLargerThin, fill=gray!25] (92) at (7, 0) {};
		\draw [style=dashed] (3.center) to (4.center);
		\draw [style=dashed] (6.center) to (5.center);
		\draw [style=dashed, in=180, out=0] (8.center) to (7.center);
		\draw [style=-, line width=4 pt, draw=white, in=0, out=180] (9) to (10.center);
		\draw [style=-] (9) to (10.center);
		\draw [style=-, line width=4 pt, draw=white, in=0, out=180] (12) to (11.center);
		\draw [style=-] (12) to (11.center);
		\draw [style=-, line width=4 pt, draw=white, in=0, out=180] (14) to (13.center);
		\draw [style=-] (14) to (13.center);
		\draw [style=dashed] (21.center) to (31.center);
		\draw [style=dashed] (20.center) to (27.center);
		\draw [style=dashed, in=180, out=0] (16.center) to (23.center);
		\draw [style=-, line width=4 pt, draw=white, in=180, out=0] (25.center) to (22.center);
		\draw [style=-] (25.center) to (22.center);
		\draw [style=-, line width=4 pt, draw=white, in=180, out=0] (24.center) to (17.center);
		\draw [style=-] (24.center) to (17.center);
		\draw [style=-, line width=4 pt, draw=white, in=180, out=0] (30.center) to (19.center);
		\draw [style=-] (30.center) to (19.center);
		\draw [style=dashed] (33.center) to (32.center);
		\draw [style=dashed] (35.center) to (34.center);
		\draw [style=dashed] (37.center) to (36.center);
		\draw [style=-, line width=4 pt, draw=white, in=180, out=0] (43.center) to (42.center);
		\draw [style=-] (43.center) to (42.center);
		\draw [style=-, line width=4 pt, draw=white, in=180, out=0] (39.center) to (38.center);
		\draw [style=-] (39.center) to (38.center);
		\draw [style=-, line width=4 pt, draw=white, in=180, out=0] (41.center) to (40.center);
		\draw [style=-] (41.center) to (40.center);
		\draw [style=dashed] (45.center) to (44.center);
		\draw [style=dashed] (47.center) to (46.center);
		\draw [style=dashed] (49.center) to (48.center);
		\draw [style=-, line width=4 pt, draw=white, in=180, out=0] (53.center) to (62.center);
		\draw [style=-] (53.center) to (62.center);
		\draw [style=-, line width=4 pt, draw=white, in=180, out=0] (59.center) to (61.center);
		\draw [style=-] (59.center) to (61.center);
		\draw [style=-, line width=4 pt, draw=white, in=180, out=0] (63.center) to (54.center);
		\draw [style=-] (63.center) to (54.center);
		\draw [style=dashed] (64.center) to (57.center);
		\draw [style=dashed] (60.center) to (58.center);
		\draw [style=dashed] (55.center) to (56.center);
		\draw [style=dashed] (66.center) to (67);
		\draw [style=dashed] (70.center) to (77.center);
		\draw [style=dashed] (72.center) to (75.center);
		\draw [style=dashed] (74.center) to (71.center);
		\draw [style=->, draw=gray, in=90, out=-90, looseness=0.75] (79) to (78.center);
		\draw [style=->, draw=gray, in=90, out=-90, looseness=0.75] (79) to (80.center);
		\draw [style=->, draw=gray, in=-90, out=90, looseness=0.75] (82) to (85.center);
		\draw [style=->, draw=gray, in=-90, out=90] (82) to (83.center);
		\draw [style=->, draw=gray, in=-90, out=90, looseness=0.75] (82) to (84.center);
		\draw [style=->, draw=gray, in=90, out=-90, looseness=0.50] (87) to (88.center);
	\end{pgfonlayer}
\end{tikzpicture}
\end{equation}
Note that the time-reversed LOCC protocol of Diagram \ref{LOCCprotocolTimeReversed} need not be normalised, even when the original LOCC protocol of Diagram \ref{LOCCprotocol} is normalised (in the latter case, however, the time-reversed protocol is certainly unital).  Under time-reversal, the protocol success condition of Equation \ref{LOCCprotocolCondition} turns into a state preparation condition:
\begin{equation}\label{LOCCprotocolTimeReversedCondition}
	\resizebox{0.9\textwidth}{!}{\begin{tikzpicture}
	\begin{pgfonlayer}{nodelayer}
		\node [style=box] (0) at (0, 2.25) {\small$\big(M_{1}^{(1)}\big)^\dagger$};
		\node [style=box] (1) at (0, -0.5) {\small$\big(M_{N-1}^{(1)}\big)^\dagger$};
		\node [style=box] (2) at (0, -2.5) {\small$\big(M_{N}^{(1)}\big)^\dagger$};
		\node [style=none] (3) at (0.5, 2) {};
		\node [style=none] (4) at (2.5, 2) {};
		\node [style=none] (5) at (2.5, -0.75) {};
		\node [style=none] (6) at (0.5, -0.75) {};
		\node [style=none] (7) at (2.5, -2.75) {};
		\node [style=none] (8) at (0.5, -2.75) {};
		\node [style=labelnode] (9) at (5.5, 2.5) {$\SpaceH_1$};
		\node [style=none] (10) at (0.5, 2.5) {};
		\node [style=none] (11) at (0.5, -0.25) {};
		\node [style=labelnode] (12) at (5.5, -0.25) {$\SpaceH_{N-1}$};
		\node [style=none] (13) at (0.5, -2.25) {};
		\node [style=labelnode] (14) at (5.5, -2.25) {$\SpaceH_{N}$};
		\node [style=none] (15) at (0, 1) {$\vdots$};
		\node [style=none] (16) at (-2.5, -2.75) {};
		\node [style=none] (17) at (-0.5, -0.25) {};
		\node [style=box] (18) at (-6, -2.5) {\small$\big(M_{N}^{(2)}\big)^\dagger$};
		\node [style=none] (19) at (-0.5, -2.25) {};
		\node [style=none] (20) at (-2.5, -0.75) {};
		\node [style=none] (21) at (-2.5, 2) {};
		\node [style=none] (22) at (-0.5, 2.5) {};
		\node [style=none] (23) at (-0.5, -2.75) {};
		\node [style=none] (24) at (-5.5, -0.25) {};
		\node [style=none] (25) at (-5.5, 2.5) {};
		\node [style=box] (26) at (-6, -0.5) {\small$\big(M_{N-1}^{(2)}\big)^\dagger$};
		\node [style=none] (27) at (-0.5, -0.75) {};
		\node [style=none] (28) at (-6, 1) {$\vdots$};
		\node [style=box] (29) at (-6, 2.25) {\small$\big(M_{1}^{(2)}\big)^\dagger$};
		\node [style=none] (30) at (-5.5, -2.25) {};
		\node [style=none] (31) at (-0.5, 2) {};
		\node [style=none] (32) at (-3.5, 2) {};
		\node [style=none] (33) at (-5.5, 2) {};
		\node [style=none] (34) at (-3.5, -0.75) {};
		\node [style=none] (35) at (-5.5, -0.75) {};
		\node [style=none] (36) at (-3.5, -2.75) {};
		\node [style=none] (37) at (-5.5, -2.75) {};
		\node [style=none] (38) at (-9.75, -0.25) {};
		\node [style=none] (39) at (-11.5, -0.25) {};
		\node [style=none] (40) at (-9.75, -2.25) {};
		\node [style=none] (41) at (-11.5, -2.25) {};
		\node [style=none] (42) at (-9.75, 2.5) {};
		\node [style=none] (43) at (-11.5, 2.5) {};
		\node [style=none] (44) at (-9.75, 2) {};
		\node [style=none] (45) at (-11.5, 2) {};
		\node [style=none] (46) at (-9.75, -0.75) {};
		\node [style=none] (47) at (-11.5, -0.75) {};
		\node [style=none] (48) at (-9.75, -2.75) {};
		\node [style=none] (49) at (-11.5, -2.75) {};
		\node [style=none] (50) at (-9, 2.25) {$\dots$};
		\node [style=none] (51) at (-9, -0.5) {$\dots$};
		\node [style=none] (52) at (-9, -2.5) {$\dots$};
		\node [style=none] (53) at (-8.25, 2.5) {};
		\node [style=none] (54) at (-6.5, -2.25) {};
		\node [style=none] (55) at (-8.25, -2.75) {};
		\node [style=none] (56) at (-6.5, -2.75) {};
		\node [style=none] (57) at (-6.5, 2) {};
		\node [style=none] (58) at (-6.5, -0.75) {};
		\node [style=none] (59) at (-8.25, -0.25) {};
		\node [style=none] (60) at (-8.25, -0.75) {};
		\node [style=none] (61) at (-6.5, -0.25) {};
		\node [style=none] (62) at (-6.5, 2.5) {};
		\node [style=none] (63) at (-8.25, -2.25) {};
		\node [style=none] (64) at (-8.25, 2) {};
		\node [style=none] (65) at (5.5, 1) {$\vdots$};
		\node [style=none] (66) at (-15.5, 0) {};
		\node [style=state] (67) at (-17.5, 0) {$b$};
		\node [style=box] (68) at (-12, -0.5) {\small$\big(M_{N-1}^{(R)}\big)^\dagger$};
		\node [style=none] (69) at (-12, 1) {$\vdots$};
		\node [style=none] (70) at (-14.5, 2.25) {};
		\node [style=none] (71) at (-12.5, -2.5) {};
		\node [style=none] (72) at (-14.5, -0.5) {};
		\node [style=box] (73) at (-12, -2.5) {\small$\big(M_{N}^{(R)}\big)^\dagger$};
		\node [style=none] (74) at (-14.5, -2.5) {};
		\node [style=none] (75) at (-12.5, -0.5) {};
		\node [style=box] (76) at (-12, 2.25) {\small$\big(M_{1}^{(R)}\big)^\dagger$};
		\node [style=none] (77) at (-12.5, 2.25) {};
		\node [style=labelnode] (78) at (7.5, 0) {$=$};
		\node [style=stateLargerThin] (79) at (9.5, 0) {$\psi_b$};
		\node [style=labelnode] (80) at (12.5, -2.25) {$\SpaceH_{N}$};
		\node [style=none] (81) at (10, -2.25) {};
		\node [style=none] (82) at (10, -0.25) {};
		\node [style=labelnode] (83) at (12.5, 2.5) {$\SpaceH_1$};
		\node [style=none] (84) at (10, 2.5) {};
		\node [style=labelnode] (85) at (12.5, -0.25) {$\SpaceH_{N-1}$};
		\node [style=none] (86) at (12.5, 1) {$\vdots$};
	\end{pgfonlayer}
	\begin{pgfonlayer}{edgelayer}
		\node [style=boxLargerThin, fill=gray!25] (87) at (-3, 0) {};
		\node [style=boxLargerThin, fill=gray!25] (88) at (-15, 0) {};
		\node [style=effectLargerThin, fill=gray!25] (89) at (3, 0) {};
		\draw [style=dashed] (3.center) to (4.center);
		\draw [style=dashed] (6.center) to (5.center);
		\draw [style=dashed, in=180, out=0] (8.center) to (7.center);
		\draw [style=-, line width=4 pt, draw=white, in=0, out=180] (9) to (10.center);
		\draw [style=-] (9) to (10.center);
		\draw [style=-, line width=4 pt, draw=white, in=0, out=180] (12) to (11.center);
		\draw [style=-] (12) to (11.center);
		\draw [style=-, line width=4 pt, draw=white, in=0, out=180] (14) to (13.center);
		\draw [style=-] (14) to (13.center);
		\draw [style=dashed] (21.center) to (31.center);
		\draw [style=dashed] (20.center) to (27.center);
		\draw [style=dashed, in=180, out=0] (16.center) to (23.center);
		\draw [style=-, line width=4 pt, draw=white, in=180, out=0] (25.center) to (22.center);
		\draw [style=-] (25.center) to (22.center);
		\draw [style=-, line width=4 pt, draw=white, in=180, out=0] (24.center) to (17.center);
		\draw [style=-] (24.center) to (17.center);
		\draw [style=-, line width=4 pt, draw=white, in=180, out=0] (30.center) to (19.center);
		\draw [style=-] (30.center) to (19.center);
		\draw [style=dashed] (33.center) to (32.center);
		\draw [style=dashed] (35.center) to (34.center);
		\draw [style=dashed] (37.center) to (36.center);
		\draw [style=-, line width=4 pt, draw=white, in=180, out=0] (43.center) to (42.center);
		\draw [style=-] (43.center) to (42.center);
		\draw [style=-, line width=4 pt, draw=white, in=180, out=0] (39.center) to (38.center);
		\draw [style=-] (39.center) to (38.center);
		\draw [style=-, line width=4 pt, draw=white, in=180, out=0] (41.center) to (40.center);
		\draw [style=-] (41.center) to (40.center);
		\draw [style=dashed] (45.center) to (44.center);
		\draw [style=dashed] (47.center) to (46.center);
		\draw [style=dashed] (49.center) to (48.center);
		\draw [style=-, line width=4 pt, draw=white, in=180, out=0] (53.center) to (62.center);
		\draw [style=-] (53.center) to (62.center);
		\draw [style=-, line width=4 pt, draw=white, in=180, out=0] (59.center) to (61.center);
		\draw [style=-] (59.center) to (61.center);
		\draw [style=-, line width=4 pt, draw=white, in=180, out=0] (63.center) to (54.center);
		\draw [style=-] (63.center) to (54.center);
		\draw [style=dashed] (64.center) to (57.center);
		\draw [style=dashed] (60.center) to (58.center);
		\draw [style=dashed] (55.center) to (56.center);
		\draw [style=dashed] (66.center) to (67);
		\draw [style=dashed] (70.center) to (77.center);
		\draw [style=dashed] (72.center) to (75.center);
		\draw [style=dashed] (74.center) to (71.center);
		\draw [style=-, line width=4 pt, draw=white, in=0, out=180] (83) to (84.center);
		\draw [style=-] (83) to (84.center);
		\draw [style=-, line width=4 pt, draw=white, in=0, out=180] (85) to (82.center);
		\draw [style=-] (85) to (82.center);
		\draw [style=-, line width=4 pt, draw=white, in=0, out=180] (80) to (81.center);
		\draw [style=-] (80) to (81.center);
	\end{pgfonlayer}
\end{tikzpicture}}
\end{equation}
By repeatedly invoking Equation \ref{LOCCinstrument}, i.e. by inserting classical resolutions of the identity on all classical wires, the LHS of Equation \ref{LOCCprotocolTimeReversedCondition} turns into a mixture of product states. But RHS of Equation \ref{LOCCprotocolTimeReversedCondition} is a pure state, and by the very definition of purity we have that the mixture on the LHS is necessarily trivial. We conclude that $\psi_b$ must be a pure product state, for each choice of $b \in B$: under the assumption that the players can deterministically distinguish all states in the basis, we have shown that the basis cannot contain any entangled states.

Conversely, assume that the family $(\psi_b)_{b \in B}$ contains only product states, and write $\psi_b = \otimes_{i=1}^{N} \psi_{b,i}$. Without loss of generality, assume that all the local states $\psi_{b,i}$ are pure (because $\psi_b$ is) and normalised (because $\psi_b$ is normalised, and hence all $\psi_{b,i}$ must be normalisable). Then the family can be prepared by using a normalised LOCC protocol as follows:
\begin{equation}\label{LOCCprotocolProdPreparation}
	\resizebox{0.9\textwidth}{!}{\begin{tikzpicture}
	\begin{pgfonlayer}{nodelayer}

		\node [style=state] (0) at (-17, 0) {$b$};
		\node [style=boxLargerThin, fill=gray!25] (1) at (-15, 0) {};
		\node [style=none] (2) at (-14.5, 2.25) {};
		\node [style=box] (3) at (-12, 2.25) {\small$\hspace{2.5mm}M_{1}\hspace{2.5mm}$};
		\node [style=none] (4) at (-12, 1) {$\vdots$};
		\node [style=none] (5) at (-14.5, -0.5) {};
		\node [style=box] (6) at (-12, -0.5) {\small$M_{N-1}$};
		\node [style=none] (7) at (-14.5, -2.5) {};
		\node [style=box] (8) at (-12, -2.5) {\small$\hspace{2mm}M_{N}\hspace{2mm}$};
		\node [style=labelnode] (9) at (-9, 2.25) {$\SpaceH_1$};
		\node [style=labelnode] (10) at (-9, -0.5) {$\SpaceH_{N-1}$};
		\node [style=labelnode] (11) at (-9, -2.5) {$\SpaceH_{N}$};
		\node [style=none] (12) at (-7,0) {$=$};

		\node [style=state] (13) at (-5, 2.25) {$b$};
		\node [style=box] (14) at (-2.5, 2.25) {\small$\hspace{2.5mm}M_{1}\hspace{2.5mm}$};
		\node [style=none] (15) at (-2.5, 1) {$\vdots$};
		\node [style=state] (16) at (-5, -0.5) {$b$};
		\node [style=box] (17) at (-2.5, -0.5) {\small$M_{N-1}$};
		\node [style=state] (18) at (-5, -2.5) {$b$};
		\node [style=box] (19) at (-2.5, -2.5) {\small$\hspace{2mm}M_{N}\hspace{2mm}$};
		\node [style=labelnode] (20) at (0.5, 2.25) {$\SpaceH_1$};
		\node [style=labelnode] (21) at (0.5, -0.5) {$\SpaceH_{N-1}$};
		\node [style=labelnode] (22) at (0.5, -2.5) {$\SpaceH_{N}$};
		\node [style=none] (23) at (2.5,0) {$=$};

		\node [style=state] (24) at (5, 2.25) {\small$\hspace{2.5mm}\psi_{b,1}\hspace{2.5mm}$};
		\node [style=none] (25) at (5, 1) {$\vdots$};
		\node [style=state] (26) at (5, -0.5) {\small$\psi_{b,N-1}$};
		\node [style=state] (27) at (5, -2.5) {\small$\hspace{2mm}\psi_{b,N}\hspace{2mm}$};
		\node [style=labelnode] (28) at (8, 2.25) {$\SpaceH_1$};
		\node [style=labelnode] (29) at (8, -0.5) {$\SpaceH_{N-1}$};
		\node [style=labelnode] (30) at (8, -2.5) {$\SpaceH_{N}$};
		\node [style=none] (31) at (10,0) {$=$};

		\node [style=stateLargerThin] (32) at (12, 0) {$\psi_b$};
		\node [style=none] (33) at (12, 2.25) {};
		\node [style=labelnode] (34) at (15, 2.25) {$\SpaceH_1$};		
		\node [style=none] (35) at (12, -0.5) {};
		\node [style=labelnode] (36) at (15, -0.5) {$\SpaceH_{N-1}$};
		\node [style=none] (37) at (12, -2.5) {};
		\node [style=labelnode] (38) at (15, -2.5) {$\SpaceH_{N}$};
		\node [style=none] (39) at (15, 1) {$\vdots$};

	\end{pgfonlayer}
	\begin{pgfonlayer}{edgelayer}
		\draw [style=dashed] (0) to (1);

		\draw [style=dashed] (2) to (3);
		\draw [style=dashed] (5) to (6);
		\draw [style=dashed] (7) to (8);
		\draw [style=-] (3) to (9);
		\draw [style=-] (6) to (10);
		\draw [style=-] (8) to (11);

		\draw [style=dashed] (13) to (14);
		\draw [style=dashed] (16) to (17);
		\draw [style=dashed] (18) to (19);
		\draw [style=-] (14) to (20);
		\draw [style=-] (17) to (21);
		\draw [style=-] (19) to (22);

		\draw [style=-] (24) to (28);
		\draw [style=-] (26) to (29);
		\draw [style=-] (27) to (30);

		\draw [style=-] (33) to (34);
		\draw [style=-] (35) to (36);
		\draw [style=-] (37) to (38);
	\end{pgfonlayer}
\end{tikzpicture}}
\end{equation}
The time-reversal of the LOCC protocol described by Diagram \ref{LOCCprotocolProdPreparation} is a unital LOCC protocol which implements the state distinguishing task for the family $(\psi_b)_{b \in B}$. The time-reversed LOCC protocol is normalised (resp. unital) if and only if the original protocol described by Diagram \ref{LOCCprotocolProdPreparation} is unital (resp. normalised).

Finally, we can put these results together. If the players can perfectly distinguish between the states of a complete orthonormal family by a normalised unital LOCC protocol, then the family can be prepared by the time-reversed LOCC protocol, which is unital and normalised, and hence it cannot contain any entangled states. Vice versa, if the states in a complete orthonormal family can be perfectly prepared by a normalised unital LOCC protocol, then the family does not contain any product states and it can furthermore be distinguished by the time-reversed LOCC protocol, which is unital and normalised.
\end{proof}

\section{Discussion}

We have provided a simple argument showing that any complete orthonormal family of multipartite pure states which can be perfectly distinguished by LOCC protocols cannot contain any entangled states. Our proof is diagrammatic and theory-independent, and straightforwardly applies to both quantum theory and any post-quantum theory which can be modelled by our categorical framework.
A number of well-established results in LOCC-distinguishability arise as a corollary of our work: for example, the fact that the four two-qubit Bell states are not LOCC-distinguishable \cite{Getal01}, or the fact that the four two-qubit states $\{\ket{00}, \ket{11}, \ket{01}+\ket{10},\ket{01}-\ket{10}\}$ are not LOCC-distinguishable \cite{Getal02}. Contrary to the majority of previous results, our treatment is independent of the number of parties and of the dimension of quantum systems involved.

Our proof shows that LOCC-distinguishability of complete orthonormal families is really about purity and time-reversal, bearing no relation to normalisation and causality. In this sense, it is a story about the shape of processes, rather than their inner workings. The result also finds a particularly fitting interpretation in the resource theory of entanglement: free processes can distinguish between states of a complete orthonormal family only when the states themselves are all free.

Future work will focus on extending our theory-independent diagrammatic approach to more general problems of interest in LOCC-distinguishability and the resource theory of entanglement. For example, we propose to investigate the complete orthonormal family of \cite{Betal99} from a process shape perspective, hopefully shedding further light on an otherwise counter-intuitive result.

\begin{acknowledgments}
	SG gratefully acknowledges funding from \mbox{EPSRC} and the \mbox{Trinity} \mbox{College} \mbox{Williams} Scholarship. SRM gratefully acknowledges funding from the \mbox{Clarendon Fund}, the \mbox{Keble} \mbox{College} \mbox{Sloane} \mbox{Robinson} Award, and the \mbox{Oxford-DeepMind} Graduate Scholarship. This publication was made possible through the support of a grant from the John Templeton Foundation. The opinions expressed in this publication are those of the authors and do not necessarily reflect the views of the John Templeton Foundation.
\end{acknowledgments}


\end{document}